
\documentclass[10pt,journal,compsoc]{IEEEtran}

\ifCLASSOPTIONcompsoc
  \usepackage[nocompress]{cite}
\else
  \usepackage{cite}
\fi

\ifCLASSINFOpdf
\else
\fi


\usepackage{wrapfig}
\usepackage{graphicx}
\usepackage{amsmath}
\usepackage{amssymb}
\usepackage{amsthm}
\usepackage{multirow}
\usepackage{lipsum}
\usepackage{xcolor,colortbl}
\usepackage{enumitem}

\usepackage{algorithmic}
\usepackage{color}
\usepackage{verbatim}





\newtheorem{lemma}{Lemma}

\newtheorem{remark}{Remark}

\DeclareMathOperator*{\argmax}{arg\,max}

\definecolor{Gray}{gray}{0.8}

\makeatletter
\def\maketag@@@#1{\hbox{\m@th\normalfont\normalsize#1}}
\makeatother

\pdfinfo{
/Title (Optimal Multiphase Investment Strategies for Influencing Opinions in a Social Network)
/Author (Swapnil Dhamal, 
Walid Ben-Ameur, 
Tijani Chahed, 
Eitan Altman)}

 \usepackage{framed}

\begin{document}

\title{Optimal Multiphase Investment Strategies for Influencing Opinions in a Social Network}
\author{Swapnil Dhamal,
Walid Ben-Ameur,
Tijani Chahed, and 
Eitan Altman 
\IEEEcompsocitemizethanks{\IEEEcompsocthanksitem S. Dhamal, W. Ben-Ameur, and T. Chahed are with Samovar, T´el´ecom SudParis, CNRS, Universit´e Paris-Saclay, France.
E. Altman is with Institut National de Recherche en Informatique et en Automatique (INRIA) Sophia
Antipolis-M´editerran´ee, France.
}
}

%

\IEEEtitleabstractindextext{%
\begin{abstract}
We study the problem of optimally investing in nodes of a social network in a competitive setting, where two camps aim to maximize adoption of their opinions by the population. In particular, we consider the possibility of campaigning in multiple phases, where the final opinion of a node in a phase acts as its initial biased opinion for the following phase. Using an extension of the popular DeGroot-Friedkin model, we formulate the utility functions of the camps, and show that they involve what can be interpreted as multiphase Katz centrality. Focusing on two phases, we analytically derive Nash equilibrium investment strategies, and the extent of loss that a camp would incur if it acted myopically. Our simulation study affirms that nodes attributing higher weightage to initial biases necessitate higher investment in the first phase, so as to influence these biases for the terminal phase. We then study the setting in which a camp's influence on a node depends on its initial bias. For single camp, we present a polynomial time algorithm for determining an optimal way to split the budget between the two phases. For competing camps, we show the existence of Nash equilibria under reasonable assumptions, and that they can be computed in polynomial time.
\end{abstract}

\begin{IEEEkeywords}
Social networks,
opinion dynamics,
elections,
multiple phases,
zero-sum games,
Nash equilibrium,
Katz centrality.
\end{IEEEkeywords}
}

\maketitle

\IEEEdisplaynontitleabstractindextext

\IEEEpeerreviewmaketitle

\vspace{-1mm}
\IEEEraisesectionheading{\section{Introduction}}
\label{sec:ODSNmultiphase_intro}

\IEEEPARstart{T}{he} study of opinion dynamics in a society has been of prime importance 
to understand and influence the processes and outcomes of elections, viral marketing, propagation of ideas and behaviors, etc. 
In this paper, we consider two competing camps who aim to maximize the adoption of their respective opinions in a social network. In particular, we consider a strict competition setting where the opinion value of one camp is denoted by $+1$ and that of the other camp by $-1$; we refer to these camps as good and bad camps respectively.
%
%
%
%
Opinion adoption by a population can be quantified in a variety of ways based on the  application;
here we consider a well-accepted way, namely, 
the average or equivalently, the sum of the opinion values of the nodes in the network \cite{gionis2013opinion,grabisch2017strategic}.


The average or sum of the opinion values is of relevance in several applications, for example,  a fund collection scenario where the magnitude of a node's opinion value corresponds to the amount and the sign indicates the camp towards which it is willing to contribute.
Another example is when nodes are agents reporting the 
intensity of an
event based on the information they receive individually; this information could be influenced by the camps.
Other examples are elections and product purchases, where the bounded opinion value of a node would  translate into its probability of voting for a particular party or purchasing  a particular product, and so the sum of  opinion values would  translate into the expected number of votes or purchases in favor of a party or product.


Social networks play a prime role in determining the opinions of the constituent nodes, since nodes usually update their opinions based on the opinions of their connections \cite{easley2010networks,acemoglu2011opinion}. 
This fact is exploited by a camp who intends to influence the opinions of these nodes in its favor.
%
 %
 A camp could determine nodes whose opinion values it would want to directly influence (without the aid of social network) by investing on them in the form of money, free products or discounts, attention, convincing discussions, etc. 
 %
Thus given a budget constraint, the strategy of a camp comprises of how much to invest on a given node, in presence of a competing camp who also would invest strategically, so as to maximize its opinion adoption. 
As per the popular DeGroot-Friedkin model \cite{friedkin1990social,friedkin1997social}, every node holds an initial bias in opinion which could have  formed owing to 
the node's fundamental views, its experiences,
past information from news and other sources, opinion dynamics in the past, etc. 
%
This initial bias  plays an important role in determining a node's final opinion, and consequently the opinions of its neighbors and hence that of its neighbors' neighbors and so on. If nodes give significant weightage to their  biases, the camps would want to influence these biases. 
%
This could be implemented by campaigning in multiple phases or rounds,
 wherein the opinion at the conclusion of a phase would act as the initial biased opinion for the next phase. Such campaigning is often used during elections and  product marketing, so as to gradually drive the opinions of nodes.
With the possibility of campaigning in multiple phases, a camp could not only decide which nodes to invest on, but also how to split its available budget across different phases.


\vspace{-2mm}
\section{Related Work}
\label{sec:ODSNmultiphase_relevant}

The topic of opinion dynamics in social networks has received significant attention in the autonomous agents and multiagent systems community in recent times \cite{chatterjee2013predicting,crawford2013opposites,grandi2017strategic,grandi2015propositional,sina2015adapting,soriano2016simultaneous,tsang2014opinion,yadav2016using}.
Several models of opinion dynamics have been proposed and studied in the literature
\cite{acemoglu2011opinion,lorenz2007continuous}, 
%
%
%
%
some of the noteworthy models being
DeGroot \cite{degroot1974reaching}, Voter \cite{clifford1973model,holley1975ergodic}, DeGroot-Friedkin \cite{friedkin1990social,friedkin1997social}, bounded confidence \cite{krause2000discrete}, etc.
In DeGroot-Friedkin model, each node updates its opinion using a weighted convex combination of its initial opinion and neighbors' opinions. 
Our model can be viewed as a multiphase generalization of this model, while also accounting for 
the camps' investments.
%

%
Problems relating to maximizing opinion diffusion in social networks have been extensively studied in the literature \cite{guille2013information,easley2010networks,kempe2003maximizing}.
A primary task in such problems is to determine influential nodes,
which has been an important research area in the multiagent systems community 
 \cite{amoruso2017contrasting,cholvy2016influence,ghanem2012agents,li2017agent,maghami2012identifying,pasumarthi2015near}.
Gionis, Terzi, and Tsaparas \cite{gionis2013opinion} study the problem of identifying such nodes whose positive opinions 
would maximize the overall positive opinion 
in the  network.
%
%
Yildiz, Ozdaglar, and Acemoglu \cite{yildiz2013binary} study the problem of optimal placement of stubborn nodes (whose opinion values stay unchanged) in the discrete binary opinions setting, 
given the location of competing stubborn agents.
%
%
The competitive setting has resulted in several  game theoretic studies \cite{ghaderi2014opinion,anagnostopoulos2015competitive,bharathi2007competitive,goyal2014competitive}.
%
%
Specific to analytically tractable models such as DeGroot,
there have been studies in the competitive setting
 to identify influential nodes 
 and the amounts to be invested on them
\cite{dubey2006competing,bimpikis2016competitive,grabisch2017strategic}.
Our work extends these studies to multiple phases, that is, we aim to determine 
the influential nodes in different phases and how much they should be invested on in a given phase.

A few studies have incorporated the concept of multiple phases.
Singer \cite{singer2016influence} presents a survey of  adaptive seeding methods for influence maximization. 
%
%
Horel and Singer \cite{horel2015scalable} develop  such methods with provable guarantees for models such as Voter, where the influence of a set can be expressed as the sum of the influence of its constituent nodes. 
%
Dhamal, Prabuchandran, and Narahari \cite{dhamal2015multiphase,dhamal2016information} empirically study the problem of 
optimally splitting the  budget between two phases.
%
While the primary reasoning behind using multiple phases in these studies is adaptation of strategy based on observations, we aim to use multiple phases for influencing the initial biases of the nodes, which requires a very different treatment.





To the best of our knowledge, there has not been an analytical study on a rich model such as DeGroot-Friedkin, for opinion dynamics in multiple phases (not even for single camp).
Following are the specific contributions of this paper:



\begin{itemize} [leftmargin=5mm]

\item
We formulate the multi-phase objective function under DeGroot-Friedkin model, 
resulting in a zero-sum game.

\item
Focusing on two phases,
we derive Nash equilibrium investment strategies
and
the loss that a camp would incur if it played myopic strategy instead of farsighted.

\item
We go deeper into the analysis when a camp's influence on a node depends on its initial bias, and develop a polynomial time algorithm for determining an optimal way to split a camp's budget between the two phases.

\item 
With two competing camps, we show the existence of Nash equilibria under reasonable assumptions, and that they can be computed in polynomial time.


\end{itemize}

%
%
%
%
%

\vspace{-2mm}
\section{Model}
\label{sec:ODSNmultiphase_prob}
%
%
%

Given a social network,
let $N$ be the set of nodes and $E$ be the set of weighted  directed edges.
Let $n=|N|$, $m=|E|$.
The weights could hold any sign, since the influences could be positive or negative 
\cite{altafini2013consensus}.
The good camp (opinion value of $+1$) aims to maximize the sum (or average) of the opinion values of the nodes, while the bad camp (opinion value of $-1$) simultaneously aims to minimize it.

\vspace{-1mm}
\subsection{Parameters}
\label{sec:ODSNmultiphase_model}



\subsubsection{Initial Bias}

As per DeGroot-Friedkin model, every node $i$ holds an initial bias in opinion prior to the opinion dynamics process; we denote it by $v_i^0$. In the multiphase setting, this acts as the initial bias for the first phase.
We denote the opinion value of node $i$ at the conclusion of phase $p$ by $v_i^{(p)}$.
This acts as the initial opinion or bias for phase $p+1$.
Naturally, we can denote its initial bias for phase $p$ by $v_i^{(p-1)}$.
So by convention, we have $v_i^{(0)}=v_i^0$.
We denote the weightage that node $i$ attributes to its initial bias by $w_{ii}^0$.

\subsubsection{Network Effect}

Let $w_{ij}$ be the weightage attributed by node $i$ to the opinion of its connection $j$.
%
Consistent with DeGroot-Friedkin model, we assume the influence on node $i$ owing to node $j$ in phase $p$, to be $w_{ij}v_j^{(p)}$. So the net influence owing to all of its connections is $\sum_{j\neq i} w_{ij}v_j^{(p)}$. Let $w_{ii}$ denote the weightage attributed by node $i$ to its own current opinion. 
%
%


\subsubsection{Camp Investments}

The good and bad camps attempt to directly influence the nodes so that their opinions are pulled towards being positive and negative respectively. 
We denote the investments made by the good and bad camps on node $i$ by $x_i$ and $y_i$ respectively, and the weightage that node $i$ attributes to them by $w_{ig}$ and $w_{ib}$. Since the influence of good camp (opinion of $+1$) on node $i$ would be an increasing function of both $x_i$ and $w_{ig}$, we assume the influence to be $+w_{ig} x_i$ for simplicity. Similarly, $-w_{ib} y_i$ is the influence of bad camp (opinion of $-1$) on node $i$. 
%
Let $k_g$ and $k_b$ be the respective budgets of the good and bad camps, so the camps should invest such that $\sum_i x_i \leq k_g$ and $\sum_i y_i \leq k_b$.
%

\subsubsection{Matrix Forms}

Let $\mathbf{w}$ be the matrix consisting of the weights $w_{ij}$ for each pair of nodes $(i,j)$.
%
%
Let $\mathbf{v^0}$, $\mathbf{w^0}$, $\mathbf{w_g}$, $\mathbf{w_b}$, $\mathbf{x^{(p)}}$, $\mathbf{y^{(p)}}$, $\mathbf{v^{(p)}}$
be the vectors consisting of the elements
$v_i^0$, $w_{ii}^0$, $w_{ig}$, $w_{ib}$, $x_i^{(p)}$, $y_i^{(p)}$, $v_i^{(p)}$, respectively.
%
Let the operation $\circ$ denote Hadamard product 
of vectors, that is, $(\mathbf{a}\circ\mathbf{b})_{i} = {a}_i {b}_i$. 


\subsubsection{Weight Constraints}

Like standard opinion dynamics models, we have the following condition on influence weights:

\begin{small}
\begin{align*}
\forall i: \;
|w_{ii}^0|+|w_{ii}|+\sum_{j\neq i} |w_{ij}|+|w_{ig}|+|w_{ib}| \leq 1
\end{align*}
\end{small}

\noindent
A standard assumption for guaranteeing convergence is

\begin{small}
\begin{align*}
|w_{ii}|+\sum_{j\neq i} |w_{ij}| < 1
\end{align*}
\end{small}

\noindent
Since we would have weights attributed to influences external to the network (influences due to bias ($w_{ii}^0$) and campaigning ($w_{ig},w_{ib}$)),
this assumption is well suited for our model.


\subsection{Opinion Update Rule} 

The opinion update rule in DeGroot-Friedkin model is
$
\forall i:
v_i
\leftarrow w_{ii}^{0} v_i^0
+ w_{ii} v_i
+ \sum_{j \neq i} w_{ij}v_j
$,
or equivalently, 
$
\forall i:
v_i
\leftarrow w_{ii}^{0} v_i^0
+ \sum_{j} w_{ij}v_j
$.
Generalizing to multiphase setting, the update rule in the $p^{\text{th}}$ phase is
$
\forall i:
v_i^{(p)}
\leftarrow w_{ii}^{0} v_i^{(p-1)}
+ \sum_{j} w_{ij}v_j^{(p)}
$.
Taking the investments 
of the camps into account, it is
%
%

\begin{small}
\vspace{-2mm}
\begin{align}
\forall i:\;
v_i^{(p)}
&\leftarrow w_{ii}^{0} v_i^{(p-1)}
+ \sum_{j} w_{ij}v_j^{(p)}
+ w_{ig} x_i^{(p)}
- w_{ib} y_i^{(p)}
\nonumber
\end{align}
\vspace{-4mm}
\end{small}

%
\begin{small}
\vspace{-2mm}
\begin{align}
\Longleftrightarrow \;\;
\mathbf{v}^{\mathbf{(p)}} &\leftarrow  \mathbf{w^0} \! \circ \! \mathbf{v^{\mathbf{(p-1)}}} + \mathbf{w}\mathbf{v}^{\mathbf{(p)}}  + \mathbf{w_g} \! \circ \! \mathbf{x}^{\mathbf{(p)}} - \mathbf{w_b} \! \circ \! \mathbf{y}^{\mathbf{(p)}}
\label{eqn:update_rule_matrix}
\end{align}
\vspace{-3mm}
\end{small}

\noindent
The static vectors: $\mathbf{x}^{\mathbf{(p)}}, \mathbf{y}^{\mathbf{(p)}}, \mathbf{v^{\mathbf{(p-1)}}}$ (weighted by $\mathbf{w_g}, \mathbf{w_b}, \mathbf{w^0}$), are unchanged throughout a phase $p$, while
$\mathbf{v}^{\mathbf{(p)}}$ gets updated.






\noindent
Writing it as a recursion (with iterating variable $\tau$):
\begin{small}
\begin{align*}
\mathbf{v}_{\langle\tau\rangle}^{\mathbf{(p)}} &= \mathbf{w}\mathbf{v}_{\langle\tau-1\rangle}^{\mathbf{(p)}} + \mathbf{w^0}\circ\mathbf{v^{\mathbf{(p-1)}}} + \mathbf{w_g}\circ\mathbf{x}^{\mathbf{(p)}} - \mathbf{w_b}\circ\mathbf{y}^{\mathbf{(p)}}
\end{align*}
\vspace{-2mm}
\end{small}

\noindent
It can be simplified as
\begin{small}
\begin{align*}
\mathbf{v}_{\langle\tau\rangle}^{\mathbf{(p)}} \! &= \!
\mathbf{w}^\tau \mathbf{v}_{\langle 0 \rangle}^{\mathbf{(p)}}
 \! + \! \big( \sum_{\eta=0}^{\tau-1}  \mathbf{w}^\eta \big)(\mathbf{w^0} \! \circ \! \mathbf{v^{\mathbf{(p-1)}}} \! + \! \mathbf{w_g} \! \circ \! \mathbf{x}^{\mathbf{(p)}} \! - \! \mathbf{w_b} \! \circ \! \mathbf{y}^{\mathbf{(p)}}) 
\end{align*}
\end{small}

\noindent
Now $\mathbf{v}_{\langle 0 \rangle}^{\mathbf{(p)}}$ is the opinion at the beginning of phase $p$, which is $\mathbf{v}^{\mathbf{(p-1)}}$.
For determining if the process reaches convergence, we see the limiting value when $\tau \rightarrow \infty$. 


%

\noindent
Since $\mathbf{w}$ is a strictly substochastic matrix (with the sum of each of its rows strictly less than 1),
its spectral radius is less than 1.
So $\lim_{\tau \rightarrow \infty}\mathbf{w}^\tau = 0$, and
$\sum_{\eta=0}^\infty \mathbf{w}^\eta = (\mathbf{I}-\mathbf{w})^{-1}$
\cite{grabisch2017strategic}.

%
\noindent
Hence as $\tau \rightarrow \infty$, 
the above equation
can be written as

\begin{small}
\vspace{-3mm}
\begin{align*}
\lim_{\tau \rightarrow \infty}
\mathbf{v}_{\langle\tau\rangle}^{\mathbf{(p)}} = (\mathbf{I}-\mathbf{w})^{-1} (\mathbf{w^0}\circ\mathbf{v^{\mathbf{(p-1)}}} + \mathbf{w_g}\circ\mathbf{x}^{\mathbf{(p)}} - \mathbf{w_b}\circ\mathbf{y}^{\mathbf{(p)}})
\end{align*}
\vspace{-2mm}
\end{small}

\noindent
which is a constant vector.
%
%
%
So the  dynamics in phase $p$ converges to 
the steady state

\begin{small}
\vspace{-3mm}
\begin{align}
\hspace{-1.5mm}
\mathbf{v}^{\mathbf{(p)}} &= (\mathbf{I}-\mathbf{w})^{-1} (\mathbf{w^0}\circ\mathbf{v^{(p-1)}}  + \mathbf{w_g}\circ\mathbf{x^{(p)}} - \mathbf{w_b}\circ\mathbf{y^{(p)}})
\label{eqn:basic}
\end{align}
\vspace{-3mm}
\end{small}

\vspace{-5mm}
\section{Problem Formulation}
\label{sec:multiphase}


We now formulate the multi-phase objective function, and analyze the two-phase opinion dynamics in further detail.
We first derive an expression for  $\sum_i v_i^{(p)}$, the sum of the opinion values of the nodes in the network at the end of phase $p$.



\noindent
Premultiplying (\ref{eqn:basic}) by $\mathbf{1}^T$ gives

\begin{small}
\vspace{-3mm}
\begin{align*}
\mathbf{1}^T \mathbf{v}^{\mathbf{(p)}} &= \mathbf{1}^T (\mathbf{I}-\mathbf{w})^{-1} (\mathbf{w^0}\circ\mathbf{v^{(p-1)}} + \mathbf{w_g}\circ\mathbf{x^{(p)}} - \mathbf{w_b}\circ\mathbf{y^{(p)}})
\end{align*}
\vspace{-2mm}
\end{small}

\noindent
Let $\mathbf{r} = (\mathbf{I}-\mathbf{w}^T)^{-1} \mathbf{1} = ((\mathbf{I}-\mathbf{w})^{-1})^T \mathbf{1} = \left( \mathbf{1}^T (\mathbf{I}-\mathbf{w})^{-1} \right)^T$.
So the above is equivalent to 

\begin{small}
\vspace{-2mm}
\begin{align}
\sum_i v_i^{(p)} = \sum_i r_i w_{ii}^0 v_i^{(p-1)} + \sum_i r_i ( w_{ig}x_i^{(p)} - w_{ib}y_i^{(p)} ) 
\label{eqn:sumatP}
\end{align}
\vspace{-2mm}
\end{small}

\noindent
Premultiplying (\ref{eqn:basic}) by $(\mathbf{r}\circ\mathbf{w^0})^T$ gives

\begin{small}
\vspace{-2mm}
\begin{align*}
\hspace{-4mm}
(\mathbf{r} \! \circ \! \mathbf{w^0})^T \mathbf{v}^{\mathbf{(p)}} \! &= \! (\mathbf{r} \! \circ \! \mathbf{w^0})^T (\mathbf{I} \! - \! \mathbf{w})^{-1} (\mathbf{w^0} \! \circ \! \mathbf{v^{(p-1)}} \! + \! \mathbf{w_g} \! \circ \! \mathbf{x^{(p)}} \! - \! \mathbf{w_b} \! \circ \! \mathbf{y^{(p)}})
\end{align*}
\vspace{-2mm}
\end{small}

\noindent
Let the constant $(\mathbf{r}\circ\mathbf{w^0})^T \! (\mathbf{I}-\mathbf{w})^{-1} \! = \! \left( (\mathbf{I} \! - \! \mathbf{w}^T)^{-1} (\mathbf{r} \! \circ \! \mathbf{w^0}) \right)^T \! = \! \mathbf{s}^T$.
So the above is equivalent to 

\begin{small}
\vspace{-2mm}
\begin{align*}
\hspace{-2mm}
\sum_i r_i w_{ii}^0 v_i^{(p)} = \sum_i s_i w_{ii}^0 v_i^{(p-1)} + \sum_i s_i ( w_{ig}x_i^{(p)} - w_{ib}y_i^{(p)} ) 
\end{align*}
\end{small}
or equivalently,

\begin{small}
\vspace{-2mm}
\begin{align*}
\sum_i r_i w_{ii}^0 v_i^{(p-1)} &= \sum_i s_i w_{ii}^0 v_i^{(p-2)} + \sum_i s_i ( w_{ig}x_i^{(p-1)} - w_{ib}y_i^{(p-1)} ) 
\end{align*}
\end{small}
Substituting this in (\ref{eqn:sumatP}) gives
\begin{small}
\begin{align}
\nonumber
\sum_i v_i^{(p)} = \sum_i s_i w_{ii}^0 v_i^{(p-2)}  + \sum_i s_i ( w_{ig}x_i^{(p-1)} - w_{ib}y_i^{(p-1)} ) \\+ \sum_i r_i ( w_{ig}x_i^{(p)} - w_{ib}y_i^{(p)} ) 
\label{eqn:recurse}
\end{align}
\end{small}
Note that $\mathbf{s}$ can be viewed as a weighted extension of $\mathbf{r}$. If $(\mathbf{I}-\mathbf{w})^{-1} = \Delta$, then 
$r_i = \sum_j \Delta_{ji}$ and $s_i = \sum_j r_j w_{jj}^0 \Delta_{ji}$.



\textbf{Multiphase Katz centrality.}
%
$r_i = \left( (\mathbf{I}-\mathbf{w}^T)^{-1} \mathbf{1} \right)_i$ resembles Katz centrality of node $i$, capturing its influencing power on other nodes
if there are no subsequent phases.
However, while selecting optimal nodes for a phase if there are subsequent phases to follow, the effectiveness of  node $i$ depends on its influencing power over those nodes ($j$), which would give good weightage ($w_{jj}^0$) to their initial opinion in the next phase, as well as have a good influencing power over other nodes ($r_j$). This is captured by $s_i$, so it can be viewed as  two-phase  Katz centrality.
%
Let  vector $\mathbf{r}$ be denoted by $\mathbf{r}^{(1)}$ and $\mathbf{s}$ by $\mathbf{r}^{(2)}$. 
In general, for an integer $q>1$, let $r_i^{(q)} = \sum_j r_j^{(q-1)} w_{jj}^0 \Delta_{ji}$
or $\mathbf{r}^{(q)} = \Delta^T(\mathbf{r}^{(q-1)}\circ\mathbf{w^0}) $.
Intuitively, ${r}_i^{(q)}$ is the measure of influence of node $i$ looking $q$ phases ahead;
it can be viewed as  $q$-phase Katz centrality.
\qed

Now the recursion in Equation (\ref{eqn:recurse}) can be simplified as

\begin{small}
\vspace{-2mm}
\begin{align}
\nonumber
\hspace{-2mm}
\sum_i v_i^{(p)} = \sum_i {r}_i^{(2)} w_{ii}^0 v_i^{(p-2)}  + \sum_i {r}_i^{(2)} ( w_{ig}x_i^{(p-1)} - w_{ib}y_i^{(p-1)} )  \\ + \sum_i {r}_i^{(1)} ( w_{ig}x_i^{(p)} - w_{ib}y_i^{(p)} ) 
\nonumber
\end{align}
\vspace{-2mm}
\end{small}

\noindent
Solving the above recursion to eliminate $v_i^{(p-2)}$ gives

\begin{small}
\begin{align}
\hspace{-3mm}
\sum_i v_i^{(p)} \! &= \! \sum_i r_i^{(p)} w_{ii}^0 v_i^0  + \sum_{q=1}^{p} \sum_i r_i^{(p-q+1)} ( w_{ig}x_i^{(q)} - w_{ib}y_i^{(q)} )
\label{eqn:multiphase}
\end{align}
\end{small}

\noindent
which is now a function of  intrinsic constants and strategies.
%
\\
If there is only a single phase, we have

\begin{small}
\begin{align}
\sum_i v_i^{(1)} &= \sum_i r_i w_{ii}^0 v_i^0  + \sum_i r_i w_{ig}x_i^{(1)} - \sum_i r_i w_{ib}y_i^{(1)} 
\label{eqn:singlephase}
\end{align}
\end{small}
We focus on the two-phase campaign in our study.
\begin{small}
\begin{align}
\nonumber
\sum_i v_i^{(2)} = \sum_i s_i w_{ii}^0 v_i^{0} + \sum_i s_i w_{ig}x_i^{(1)} - \sum_i s_i  w_{ib}y_i^{(1)}  \\  + \sum_i r_i  w_{ig}x_i^{(2)} - \sum_i r_i  w_{ib}y_i^{(2)} 
\label{eqn:2phase}
\end{align}
\end{small}

\noindent
Here $(\mathbf{x^{(1)}},\mathbf{x^{(2)}})$ is the strategy of  good camp, and $(\mathbf{y^{(1)}},\mathbf{y^{(2)}})$ is the strategy of  bad camp.
Given an investment strategy profile $\big((\mathbf{x^{(1)}},\mathbf{x^{(2)}}),(\mathbf{y^{(1)}},\mathbf{y^{(2)}})\big)$,
let $u_g\big((\mathbf{x^{(1)}},\mathbf{x^{(2)}}),(\mathbf{y^{(1)}},\mathbf{y^{(2)}})\big)$ 
be  the good camp's utility
and $u_b\big((\mathbf{x^{(1)}},\mathbf{x^{(2)}}),(\mathbf{y^{(1)}},\mathbf{y^{(2)}})\big)$ be the bad camp's utility.
The good camp aims to maximize the value of  (\ref{eqn:2phase}), while the bad camp aims to minimize it. So,

\begin{small}
\vspace{-3mm}
\begin{gather}
\nonumber
u_g\big((\mathbf{x^{(1)}},\mathbf{x^{(2)}}),(\mathbf{y^{(1)}},\mathbf{y^{(2)}})\big)= \sum_i v_i^{(2)}
\\
\;\;\; \text{and} \;\;\;
u_b\big((\mathbf{x^{(1)}},\mathbf{x^{(2)}}),(\mathbf{y^{(1)}},\mathbf{y^{(2)}})\big)= -\sum_i v_i^{(2)}
\label{eqn:2phaseutilities}
\end{gather}
\vspace{-3mm}
\end{small}

\noindent
with the following constraints on the investment strategies:

\begin{small}
\vspace{-2mm}
\begin{gather}
\nonumber
\sum_i \big( x_i^{(1)} + x_i^{(2)} \big) \leq k_g \;\;,\;\; \sum_i \big( y_i^{(1)} + y_i^{(2)} \big) \leq k_b
\\
\nonumber
\forall i: x_i^{(1)} , x_i^{(2)} , y_i^{(1)} , y_i^{(2)} \geq 0
\end{gather}
\vspace{-3mm}
\end{small}

\noindent
The game can thus be viewed as a two-player zero-sum game,
where the players determine their investment strategies $(\mathbf{x^{(1)}},\mathbf{x^{(2)}})$ and $(\mathbf{y^{(1)}},\mathbf{y^{(2)}})$;
the good camp invests as per $\mathbf{x^{(1)}}$ in the first phase and as per $\mathbf{x^{(2)}}$ in the second phase, and 
the bad camp invests as per $\mathbf{y^{(1)}}$ in the first phase and as per $\mathbf{y^{(2)}}$ in the second phase.
Our objective essentially is to find the Nash equilibrium strategies of the two camps.

%

\vspace{-2mm}
\section{
Multiphase 
Strategies
}

We call a camp farsighted if it computes its strategy considering that there would be a second phase, that is, it considers the objective function $\sum_i v_i^{(2)}$ in Equation (\ref{eqn:2phase}). 
A camp can be called myopic if it computes its strategy greedily by considering the near-sighted objective function $\sum_i v_i^{(1)}$.
We now consider cases where both camps are farsighted or myopic, and where one camp is farsighted while the other is myopic. 

%



When both camps are farsighted, the game is as per (\ref{eqn:2phaseutilities}).
%
%
Here, the optimal strategy of the good camp is to invest the entire budget on node $i$ with the maximum value of $\max \left\{ s_i w_{ig} , r_i w_{ig} \right\}$ if this value is positive (no investment is made if this value is non-positive). If $\max_i s_i w_{ig} > \max_j r_j w_{jg}$, the entire budget is invested in the first phase, while if $\max_i s_i w_{ig} < \max_j r_j w_{jg}$, the entire budget is invested in the second phase (indifferent between the phases if $\max_i s_i w_{ig} = \max_j r_j w_{jg}$).
The optimal strategy of the bad camp is analogous.
%
%
Furthermore, the optimal strategies of the camps are mutually independent, since the optimization terms with respect to different variables are decoupled.



When one camp (say the bad camp) is myopic, while the other camp (say the good camp) is farsighted,
the optimal strategy of the good camp is the same as above since it is independent of the strategy of the bad camp. 
Though the actual utility of the bad camp is $-\sum_i v_i^{(2)}$,
it perceives its utility as
$-\sum_i v_i^{(1)}$ (see Equation (\ref{eqn:singlephase})), owing to its myopic nature.
The optimal strategy of the bad camp is hence to invest the entire budget on node $i$ with the maximum value of $r_i w_{ib}$ in the first phase itself (if positive).
The loss to the bad camp owing to its myopic play is thus,

\begin{small}
\vspace{-3mm}
\begin{align*}
k_b \Big( 
\max_i \left( \max \left\{ s_i w_{ib} , r_i w_{ib} , 0 \right\} \right) - \max \left\{ s_{\hat{i}} w_{\hat{i}b}  , 0 \right\}
\Big) 
\end{align*}
\vspace{-2mm}
\end{small}

\noindent
\text{where} $\hat{i} = \argmax_i r_i w_{ib}$.

Note that when both camps are myopic, their strategies are as per the  single phase objective function (\ref{eqn:singlephase}), that is, the good camp invests entirely on a node $i$ with the highest value of $r_i w_{ig}$ (if positive) and the bad camp invests entirely on a node $i$ with the highest value of $r_i w_{ib}$ (if positive) in the first phase.
However, the actual utility would be as per the two phase objective function (\ref{eqn:2phase}) with $x_i^{(2)}=y_i^{(2)}=0,\forall i$.

%
%
%
%
%
%
%
%
%
%
%
%
%



\textbf{Case of Bounded Investment per Node.}
Here we consider the case when there is a bound on investment on every node in either phase, for instance, $0 \leq x_i^{(1)},x_i^{(2)},y_i^{(1)},y_i^{(2)} \leq 1, \forall i$.
%
%
Such bounded investments could be owing to 
discounts which cannot exceed 100\%, 
attention capacity or time constraint of a voter to receive convincing arguments, 
company policy to limit the number of free samples that can be given to a customer, 
government policy of limiting the monetary investment by a camp on a voter, 
etc.
Bounded investments would ensure that the opinion values are bounded (from Equation (\ref{eqn:basic})).
Bounded opinion values can be associated with voting in elections and product adoption, where a bounded value could be transformed into a probability of voting for a party or adopting a product. For instance, an opinion value of $v$ could imply that the probability of voting for the good camp (or adopting the good camp's product) is $(1+v)/2$ and that of voting for the bad camp is $(1-v)/2$.
The sum of opinion values would thus imply the expected number of votes in the favor of the good camp.


%
%
When both camps are farsighted,
for computing optimal strategy of the good camp, we consider the set of elements $\{s_i w_{ig},r_i w_{ig}\}_{i\in N}$. Until the budget $k_g$ is exhausted, we choose an element with the maximum value one at a time, which is not already chosen. If the chosen element is $s_j w_{jg}$, we invest on node $j$ in the first phase, while if the chosen element is $r_j w_{jg}$, we invest on node $j$ in the second phase, subject to a maximum investment of 1 unit.
The strategy of the bad camp is analogous.
Note again that the optimal strategies of the camps are mutually independent.

The scenario of one camp (say the bad camp) being myopic can be analyzed on similar lines as in the case of no bounds on investment per node, with the exception that instead of the bad camp investing entirely in one node with the highest value of $r_i w_{ib}$ in the first phase, it would invest on nodes according to decreasing values of $r_i w_{ib}$, subject to a maximum investment of 1 unit per node.
Hence the loss to the bad camp can be determined.
The scenario of both camps being myopic can also be analyzed analogously.


\begin{remark}
When there were no bounds on investment per node, it was optimal for a farsighted camp to invest its entire budget on one node in one phase. The case with the bounds on investment per node requires a farsighted camp to invest on different nodes, possibly in different phases. Such a strategy would result in an outcome in which the camps could split their budget across the two phases. 
Specifically, the budget allocated by the good camp for the first and second phases would be $\sum_i x_i^{(1)}$ and $\sum_i x_i^{(2)}$ respectively, while that allocated by the bad camp would be $\sum_i y_i^{(1)}$ and $\sum_i y_i^{(2)}$.
\end{remark}
%
\textbf{Extension to Multiple Elections Setting.}
The multi-phase study can be directly extended to the multiple elections setting, where the outcome of the first election is in itself valuable. If the immediate outcome holds a weightage of $\delta_1$ and the future outcome holds a weightage of $\delta_2$, the objective function now transforms into $ \sum_i (\delta_1 v^{(1)} + \delta_2 v^{(2)} )$. The decision parameter $s_i$ would then get replaced by $\delta_1 r_i + \delta_2 s_i$.

%

\vspace{-2mm}
\section{
Dependency of $\mathbf{w_g}$ and $\mathbf{w_b}$ on Initial Bias}
\label{sec:dependency_basic}

So far, we considered a setting where $w_{ig}$ and $w_{ib}$ for a node $i$, could be any fixed values. However, in several scenarios, 
more often that not, the initial opinion of a node would impact the effectiveness of the investment made by the two camps.
For example, in an election setting, if the initial opinion of a node is highly positive, it is likely to be the case that
 the investment made by the good camp would be more effective than that made by the bad camp. 
 In other words, such a node is likely to be more easily influenced by (or is more sensitive towards)  the investment made by the good camp.
 That is, $v_i^{(p-1)}>0$ would likely mean $w_{ig}^{(p)} > w_{ib}^{(p)}$.
%
The fundamental reasoning is similar to that of the bounded confidence model \cite{krause2000discrete}, where a node pays more attention to opinions that do not differ too much from its own opinion.

Note also that $w_{ii}^0$ would also play a role, since it says how much weightage is given by node $i$ to its initial opinion. 
So a theoretical model can be proposed on this line, wherein $w_{ig}^{(p)}$ is a monotone non-decreasing function of $w_{ii}^0 v_i^{(p-1)}$, and $w_{ib}^{(p)}$ is a monotone non-increasing function of $w_{ii}^0 v_i^{(p-1)}$.
Let node $i$ attribute a total of $\theta_i$ to the influence weights of the camps
(that is, $w_{ig}^{(p)}+w_{ib}^{(p)}=\theta_i$),
and ${\Theta}$
be the vector consisting of {elements
$\theta_i$.}
So one such simple linear model could be

\begin{small}
\vspace{-4mm}
\begin{align}
w_{ig}^{(p)} = \theta_i \left( \frac{1+w_{ii}^0 v_i^{(p-1)}}{2} \right)
\text{ , }
w_{ib}^{(p)} = \theta_i \left( \frac{1-w_{ii}^0 v_i^{(p-1)}}{2} \right)
\label{eqn:WonV}
\end{align}
\end{small}
\vspace{-3mm}


\label{sec:dependency_2phase}

The dependency of $w_{ig}$ and $w_{ib}$ on the initial opinion
is particularly significant in multi-phase campaigning, since the initial opinion changes across phases, and hence the assumption of $w_{ig}$ and $w_{ib}$ being the same across phases, may not be valid (since these values now depend on $v_i^{(p-1)}$ which would in general be different for different phases).
%
Considering that $w_{ig}$ and $w_{ib}$ can be different across phases, Equation (\ref{eqn:multiphase}) can be generalized to

\begin{small}
\vspace{-2mm}
\begin{align}
\nonumber
\sum_i v_i^{(p)} &= \sum_i r_i^{(p)} w_{ii}^0 v_i^0  + \sum_{q=1}^{p} \sum_i r_i^{(p-q+1)} ( w_{ig}^{(q)} x_i^{(q)} - w_{ib}^{(q)} y_i^{(q)} )
\end{align}
\end{small}
\vspace{-1mm}


\noindent
We now focus on two-phase campaign ($p=2$).
Recall that $\Delta = (\mathbf{I}-\mathbf{w})^{-1}$.
We now derive the objective function $\sum_i v_i^{(2)}$.

\begin{small}
\begin{align}
&\sum_j v_j^{(2)} 
= \sum_j r_j ( w_{jj}^0 v_j^{(1)} + w_{jg}^{(2)} x_j^{(2)} - w_{jb}^{(2)} y_j^{(2)} )
\nonumber
\\
&= \sum_j r_j \left( w_{jj}^0 v_j^{(1)} + \frac{\theta_j}{2} ( {1+w_{jj}^0 v_j^{(1)}} ) x_j^{(2)} -  \frac{\theta_j}{2} ( {1-w_{jj}^0 v_j^{(1)}} ) y_j^{(2)} \right)
\nonumber
\\
&= \sum_j r_j w_{jj}^0 v_j^{(1)} \left( 1 + \frac{\theta_j}{2} x_j^{(2)} + \frac{\theta_j}{2} y_j^{(2)} \right) + \sum_j r_j \frac{\theta_j}{2} (x_j^{(2)} - y_j^{(2)})
\nonumber
\end{align}
\end{small}

\noindent
The first term $\sum_j r_j w_{jj}^0 v_j^{(1)} \left( 1 + \frac{\theta_j}{2} x_j^{(2)} + \frac{\theta_j}{2} y_j^{(2)} \right)$
 can be obtained (without $v_j^{(1)}$)
by 
premultiplying (\ref{eqn:basic}) by $(\mathbf{r} \circ \mathbf{w^0} \circ (\mathbf{1} + \frac{\Theta}{2}\circ \mathbf{x^{(2)}} + \frac{\Theta}{2}\circ \mathbf{y^{(2)}}))^T$ and using (\ref{eqn:WonV}).
Hence we get

\begin{small}
\begin{align}
\hspace{-2mm}
\nonumber
\sum_i v_i^{(2)} &=
\sum_i \sum_j  \left( w_{ii}^0 v_i^{0}
\Big(1+\frac{\theta_i}{2} x_i^{(1)} + \frac{\theta_i}{2} y_i^{(1)} \Big) + \frac{\theta_i}{2}(x_i^{(1)}-y_i^{(1)}) \right)
\\\hspace{-2mm} &
\left( r_j w_{jj}^0 \Delta_{ji}  \Big(1+\frac{\theta_j}{2} x_j^{(2)} + \frac{\theta_j}{2} y_j^{(2)}\Big) \right)
+ \sum_j r_j \frac{\theta_j}{2} (x_j^{(2)} - y_j^{(2)})
\label{eqn:orig_dep_2camps}
\end{align}
\vspace{-4mm}
\end{small}

Note that unlike the 
previous setting
(Equation (\ref{eqn:2phase})), the optimal strategies of the two camps in this setting would be mutually dependent in general,
since the optimization terms with respect to different variables are coupled.

\vspace{-2mm}
\subsection{The Case of a Single Camp}
\label{sec:dep_onecamp}


First we consider a simplified yet interesting version by considering only one camp (the good camp).
%
Here we would have $y_i^{(1)}=y_i^{(2)}=0, \forall i$ in
Expression (\ref{eqn:orig_dep_2camps}).
%
%
For simplicity of notation, denote $r_j w_{jj}^0 \Delta_{ji}= b_{ji}$ and $w_{ii}^0 v_i^{0} = c_i$. 
So we have

\begin{small}
\vspace{-1mm}
\begin{align}
\hspace{-5mm}
\nonumber
\sum_i v_i^{(2)}  
 &= 
\sum_i \sum_j \! \left( c_i
\! + \! \frac{\theta_i}{2}x_i^{(1)} (c_i \! + \! 1) \right) \!
\left( b_{ji}  \Big(1 \! + \! \frac{\theta_j}{2} x_j^{(2)} \Big) \right)
\\
&\;\;\;\;\;\;\;\;\;\;\;\;\;\;\;\;\;\;\;\;\;\;\;\;\;\;\;\;\;\;\;\;\;\;\;\;\;\;\;\;\;\;\;\;\;\;\;\;\;\;\;\;\;\;\;
\! + \! \sum_j r_j \frac{\theta_j}{2} x_j^{(2)} 
\label{eqn:raw2phase}
\\
\nonumber
&= 
\sum_i x_i^{(1)} \bigg( \frac{\theta_i}{2} (c_i \! + \! 1) \sum_j 
 b_{ji}  \Big(1 \! + \! \frac{\theta_j}{2} x_j^{(2)} \Big) \bigg)
\\&\;\;\;\;\;\;\;\;\;
+
\sum_i \sum_j c_i
 b_{ji}  \bigg(1+\frac{\theta_j}{2} x_j^{(2)} \bigg)  + \sum_j r_j \frac{\theta_j}{2} x_j^{(2)} 
 \label{eqn:raw2phasefix2}
 \\
 \nonumber
 &= 
 \sum_j x_j^{(2)}
  \left( \frac{\theta_j}{2} \sum_i b_{ji} \Big( c_i + \frac{\theta_i}{2}x_i^{(1)} (c_i+1) \Big) + \frac{\theta_j}{2}  r_j  \right)
  \\&\;\;\;\;\;\;\;\;\;\;\;\;\;\;\;\;\;\;\;\;\;\;\;\;
  +
  \sum_i \sum_j b_{ji} \bigg( c_i
    + \frac{\theta_i}{2}x_i^{(1)} (c_i+1) \bigg)
     \label{eqn:raw2phasefix1}
\end{align}
\vspace{-1mm}
\end{small}

\noindent
Deducing from Equations (\ref{eqn:raw2phasefix2}) and (\ref{eqn:raw2phasefix1}) that the above objective function is a bilinear function in $\mathbf{x^{(1)}}$ and $\mathbf{x^{(2)}}$, we prove the following claim.


\begin{lemma}
It is optimal to exhaust the entire budget $(k_g^{(1)}+k_g^{(2)}=k_g)$, and if not, it is optimal to not invest at all $(k_g^{(1)}=k_g^{(2)}=0)$.
Furthermore, it is an optimal strategy to invest on at most one node in a given phase.
\label{lem:allornothing}
\end{lemma}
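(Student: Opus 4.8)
The plan is to exploit the bilinear structure of the objective established in (\ref{eqn:raw2phasefix2})--(\ref{eqn:raw2phasefix1}), proceeding in two stages: first freeze the split of the budget between the phases, then optimise the split itself. Since the feasible set $\{(\mathbf{x^{(1)}},\mathbf{x^{(2)}}):\sum_i(x_i^{(1)}+x_i^{(2)})\le k_g,\ \mathbf{x^{(1)}},\mathbf{x^{(2)}}\ge\mathbf{0}\}$ is compact and the objective continuous, a maximiser exists.

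\textbf{Step 1 (at most one node per phase).} I would fix the phase budgets $k_g^{(1)}=\sum_i x_i^{(1)}$ and $k_g^{(2)}=\sum_i x_i^{(2)}$. Then $\mathbf{x^{(1)}}$ and $\mathbf{x^{(2)}}$ range independently over the scaled simplices $\{\mathbf{x^{(1)}}\ge\mathbf{0},\sum_i x_i^{(1)}=k_g^{(1)}\}$ and $\{\mathbf{x^{(2)}}\ge\mathbf{0},\sum_i x_i^{(2)}=k_g^{(2)}\}$, so the feasible set becomes a product of two simplices. By (\ref{eqn:raw2phasefix2}) and (\ref{eqn:raw2phasefix1}) the objective is affine in $\mathbf{x^{(1)}}$ for fixed $\mathbf{x^{(2)}}$ and conversely. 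Starting from any maximiser and successively reoptimising each variable (an affine function on a simplex attains its maximum at a vertex) drives me to a vertex of the product, i.e.\ a point investing the whole $k_g^{(1)}$ on a single node $a$ and the whole $k_g^{(2)}$ on a single node $d$. Thus for every split there is an optimal allocation using at most one node per phase, which is the second assertion.

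\textbf{Step 2 (exhaust the budget or invest nothing).} Fixing the nodes $a,d$ from Step 1, I would write the objective as a scalar function $g(s,t)$ of the amounts $s=k_g^{(1)},t=k_g^{(2)}$ on the triangle $T=\{(s,t):s,t\ge 0,\ s+t\le k_g\}$; bilinearity gives $g(s,t)=\alpha+\beta s+\gamma t+\delta st$. Its Hessian $\left(\begin{smallmatrix}0&\delta\\\delta&0\end{smallmatrix}\right)$ is indefinite when $\delta\neq 0$ (and $g$ is affine when $\delta=0$), so $g$ has no interior local maximiser and its maximum over the compact set $T$ lies on the boundary. On each leg ($t=0$ or $s=0$), $g$ is affine and hence maximised at an endpoint, and every such endpoint is either the origin or a point of the hypotenuse $s+t=k_g$. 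Therefore $\max_T g$ is attained at the origin or on the budget-exhausting face.

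To conclude, I would splice the steps: beginning from a global optimum, Step 1 yields a global optimum concentrated on single nodes $a,d$, and the associated $g$ then attains its maximum over $T$ there; by Step 2 it attains the same value at the origin or on $s+t=k_g$, and choosing such a point gives a global optimum investing on at most one node per phase that either spends the entire budget or nothing. The hard part is conceptual rather than computational: because the two phases share one budget, the feasible region is a joint simplex rather than a product, so the textbook fact that bilinear optima occur at vertices fails---the indefinite Hessian shows the optimum can sit in the interior of the budget-exhausting edge. The two-stage decoupling (freezing the split to recover a product, then solving the scalar bilinear program on the triangle) is precisely what sidesteps this difficulty.
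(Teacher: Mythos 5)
Your proof is correct, but it takes a genuinely different route from the paper's. The paper works directly on the joint feasible set with the shared budget constraint and runs a block-coordinate improvement argument: holding $\mathbf{x^{(2)}}$ fixed, the objective (\ref{eqn:raw2phasefix2}) is linear in $\mathbf{x^{(1)}}$, so the best response is to pour the entire remaining budget $k_g-\sum_j x_j^{(2)}$ onto one node (the one with the largest positive coefficient) or to invest nothing if all coefficients are non-positive; the symmetric step uses (\ref{eqn:raw2phasefix1}), and iterating these weak improvements yields an optimum of the stated form, with the claim that $k_g^{(1)}+k_g^{(2)}<k_g$ is suboptimal unless both are zero asserted in a single closing sentence. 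You instead decouple the two assertions: first freeze the split, so the feasible set becomes a product of two simplices on which the biaffine objective is pushed to a vertex (concentration on one node per phase); then freeze the nodes and analyze the scalar bilinear function $g(s,t)=\alpha+\beta s+\gamma t+\delta st$ on the triangle $\{s,t\ge 0,\ s+t\le k_g\}$, using the indefinite Hessian to force the maximum onto the boundary and the affineness of the legs to force it to the origin or the hypotenuse. Your version is slightly longer but is airtight exactly where the paper is informal: you establish existence of a maximizer (compactness) and start the argument from it, so you never need convergence of bilinear coordinate ascent (which in general need not reach a global optimum), and you actually prove the exhaust-or-nothing dichotomy rather than asserting it. Your closing observation --- that on the joint simplex the optimum can sit in the relative interior of the budget-exhausting edge, so vertex reasoning alone does not suffice --- is also consistent with the paper's subsequent derivation, where the optimal split (\ref{eqn:optk1_new}) is obtained from first-order conditions and can indeed be interior to $[0,k_g]$.
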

\begin{proof}
Given any $\mathbf{x^{(2)}}$, Expression (\ref{eqn:raw2phasefix2}) can be maximized with respect to $\mathbf{x^{(1)}}$ by allocating $k_g - \sum_j x_j^{(2)}$ to a single node $i$ that maximizes
%
$
\frac{\theta_i}{2} (c_i+1) \sum_j 
 b_{ji}  \left(1+\frac{\theta_j}{2} x_j^{(2)} \right) 
 $,
  if this value is positive.
  In case of multiple such nodes, one node can be chosen at random.
  If this value is non-positive for all the nodes, it is optimal to have $\mathbf{x^{(1)}} = \mathbf{0}$.
  When $\mathbf{x^{(1)}} = \mathbf{0}$, 
  Expression (\ref{eqn:raw2phasefix1}) now implies that it is optimal to allocate the entire budget $k_g$ in the second phase to a single node $j$ that maximizes
$
    \frac{\theta_j}{2} \left( \sum_i b_{ji} c_i + r_j \right)
$,
      if this value is positive. 
      If this value is non-positive for all the nodes, it is optimal to have $\mathbf{x^{(2)}} = \mathbf{0}$.
      This is the case where starting with an $\mathbf{x^{(2)}}$, we arrive at the conclusion that it is optimal to either invest $k_g - \sum_j x_j^{(2)}$ on a single node in the first phase, or invest the entire budget $k_g$ on a single node in the second phase, or invest in neither of the phases.
      
      Similarly using (\ref{eqn:raw2phasefix1}),
starting with a given $\mathbf{x^{(1)}}$, we can conclude that it is optimal to either invest $k_g - \sum_j x_j^{(1)}$ on a single node in the second phase, or invest the entire $k_g$ on a single node in the first phase, or invest in neither phase.

So starting from any $\mathbf{x^{(1)}}$ or $\mathbf{x^{(2)}}$, we can iteratively improve (need not be strictly) on the value of (\ref{eqn:raw2phase}) by investing on at most one node in a given phase.
Furthermore, it is suboptimal to have $k_g^{(1)}+k_g^{(2)}<k_g$ unless $k_g^{(1)}=k_g^{(2)}=0$. 
\end{proof}


%
%
So there exist optimal vectors $\mathbf{x^{(1)}}$ and $\mathbf{x^{(2)}}$ that maximize (\ref{eqn:raw2phase}), such that $x_{\alpha}^{(1)} = k_g^{(1)}, x_{\beta}^{(2)} = k_g^{(2)} , x_{i \neq \alpha}^{(1)} = x_{j \neq \beta}^{(2)} = 0$.
Now the next step is to find the nodes $\alpha$ and $\beta$ that maximize (\ref{eqn:raw2phase}).
By incorporating $\alpha$ and $\beta$, the expression for $\sum_i v_i^{(2)}$ can be rewritten as


\begin{small}
\vspace{-1mm}
\begin{align}
\hspace{-5mm}
\nonumber
&
\sum_{i\neq \alpha} \sum_{j\neq \beta} c_i b_{ji} \! + \!
 \left( c_{\alpha}
\! + \! \frac{\theta_{\alpha}}{2}k_g^{(1)} (c_{\alpha} \! + \! 1) \right)
\left( b_{{\beta}{\alpha}}  \Big(1 \! + \! \frac{\theta_{\beta}}{2} k_g^{(2)} \Big) \right)
\! + \! r_{\beta} \frac{\theta_{\beta}}{2} k_g^{(2)} 
\\
\hspace{-5mm}
\nonumber
&\;\;\;\;\;\;
+ \sum_{j\neq \beta} b_{{j}{\alpha}} \left( c_{\alpha}
+ \frac{\theta_{\alpha}}{2}k_g^{(1)} (c_{\alpha}+1) \right)
+ \sum_{i\neq \alpha}  c_i
\left( b_{{\beta}i}  \Big(1+\frac{\theta_{\beta}}{2} k_g^{(2)} \Big) \right)
\\
\hspace{-5mm}
\nonumber
&=
\sum_i \sum_j c_i b_{ji} + \frac{\theta_{\alpha}}{2} k_g^{(1)}  \sum_j b_{j\alpha}(c_{\alpha}+1) 
+ \frac{\theta_{\beta}}{2} k_g^{(2)}  \Big(\sum_i b_{\beta i}c_i + r_{\beta}\Big)
\\
\hspace{-5mm} &\;\;\;\;\;\;\;\;\;\;\;\;\;\;\;\;\;\;\;\;\;\;\;\;\;\;\;\;\;\;\;\;\;\;\;\;\;\;\;\;\;\;\;\;
+\frac{\theta_{\alpha}\theta_{\beta}}{4} k_g^{(1)}  k_g^{(2)} b_{\beta \alpha} (c_{\alpha}+1)
\label{eqn:twophase_onecamp_1}
\end{align}
\end{small}
\vspace{-1mm}
%

Now, for a given pair $(\alpha,\beta)$, we will find the optimal values of $k_g^{(1)}$ and $k_g^{(2)}$ from (\ref{eqn:twophase_onecamp_1}). By definition, we have $k_g^{(2)} = k_g - k_g^{(1)}$. So the expression to be maximized is

\begin{small}
\begin{align*}
\hspace{-5mm}
\sum_i \sum_j c_i b_{ji} +
 k_g^{(1)} \frac{\theta_{\alpha}}{2} \sum_j b_{j\alpha}(c_{\alpha} \! + \! 1) 
+ (k_g \! - \! k_g^{(1)}) \frac{\theta_{\beta}}{2} \Big(\sum_i b_{\beta i}c_i \! + \! r_{\beta}\Big)
\\ +k_g^{(1)}  (k_g-k_g^{(1)}) \frac{\theta_{\alpha}\theta_{\beta}}{4} b_{\beta \alpha} (c_{\alpha}+1)
\end{align*}
\end{small}

\noindent
Equating the first derivative of the above expression with respect to $k_g^{(1)}$ to zero, we get

\begin{small}
\begin{align*}
k_g^{(1)} = \frac{k_g}{2} + \frac{\sum_j b_{j\alpha}}{\theta_{\beta}b_{\beta \alpha}} - \frac{\sum_i b_{\beta i}c_i + r_{\beta}}{\theta_{\alpha}b_{\beta \alpha}(c_{\alpha}+1)}
\end{align*}
\end{small}

\noindent
A valid value of $k_g^{(1)}$ can be obtained only if 
the denominators in the above expression are non-zero. 
However, note that a zero denominator would mean that Expression (\ref{eqn:twophase_onecamp_1}) is linear, resulting in only two possibilities of $k_g^{(1)}$, namely, $0$ or $k_g$.
%
%
Also, if the second derivative with respect to $k_g^{(1)}$ is positive, that is, $- \theta_{\alpha}\theta_{\beta} r_\beta w_{\beta\beta}^0 \Delta_{\beta\alpha} (w_{\alpha\alpha}^0 v_\alpha^{0}+1) > 0$, the optimal $k_g^{(1)}$ could be either $0$ or $k_g$.

If the second derivative with respect to $k_g^{(1)}$ is negative, that is, $- \theta_{\alpha}\theta_{\beta} b_{\beta \alpha} (c_{\alpha}+1) = - \theta_{\alpha}\theta_{\beta} r_\beta w_{\beta\beta}^0 \Delta_{\beta\alpha} (w_{\alpha\alpha}^0 v_\alpha^{0}+1) < 0$,
and since $k_g^{(1)}$ is bounded between $0$ and $k_g$, the optimal $k_g^{(1)}$ for a given pair $(\alpha,\beta)$ is
(since $b_{ji}=r_j w_{jj}^0 \Delta_{ji}$, $c_i=w_{ii}^0 v_i^{0}$),


\begin{small}
\begin{align}
\min \! \left\{ \! \max \! \bigg\{  \frac{k_g}{2} \! + \! \frac{s_{\alpha}}{\theta_{\beta} r_{\beta} w_{\beta \beta}^0 \Delta_{\beta \alpha}} \! - \! \frac{1+w_{\beta \beta}^0 \sum_i  \Delta_{\beta i} w_{ii}^0 v_i^{0}}{\theta_{\alpha}  w_{\beta \beta}^0 \Delta_{\beta \alpha} (1 \! + \! w_{\alpha \alpha}^0 v_{\alpha}^{0})} , 0  \bigg\}  ,   k_g \! \right\}
\label{eqn:optk1_new}
\end{align}
\end{small}

\noindent
and the corresponding optimal value of $k_g^{(2)}$ is

\begin{small}
\begin{align}
\min \! \left\{ \! \max \! \bigg\{ \frac{k_g}{2} \! - \! \frac{s_{\alpha}}{\theta_{\beta} r_{\beta} w_{\beta \beta}^0 \Delta_{\beta \alpha}} \! + \! \frac{1+w_{\beta \beta}^0 \sum_i  \Delta_{\beta i} w_{ii}^0 v_i^{0}}{\theta_{\alpha}  w_{\beta \beta}^0 \Delta_{\beta \alpha} (1 \! + \! w_{\alpha \alpha}^0 v_{\alpha}^{0})} , 0  \bigg\} , k_g \! \right\}
\label{eqn:optk2_new}
\end{align}
\end{small}
%
%
%

When we assumed $k_g^{(1)}$ and $k_g^{(2)}$ to be fixed, we had to iterate through all $(\alpha,\beta)$ pairs to determine the one that gives the optimal value of Expression (\ref{eqn:twophase_onecamp_1}). Now, whenever we look at an $(\alpha,\beta)$ pair, we can determine the corresponding optimal values of $k_g^{(1)}$ and $k_g^{(2)}$ using (\ref{eqn:optk1_new}) and (\ref{eqn:optk2_new}), and hence determine the value of Expression (\ref{eqn:twophase_onecamp_1}) by plugging in the optimal $k_g^{(1)}$ and $k_g^{(2)}$ and that $(\alpha,\beta)$ pair. 
 %
The optimal pair $(\alpha,\beta)$ can thus be obtained as the pair that maximizes (\ref{eqn:twophase_onecamp_1}).

The above analysis holds when $k_g^{(1)}+k_g^{(2)}=k_g$.
From Lemma~\ref{lem:allornothing}, we need to consider one more possibility that 
$k_g^{(1)}=k_g^{(2)}=0$, which gives a constant value $\sum_i \sum_j c_i b_{ji}$ for Expression (\ref{eqn:twophase_onecamp_1}).
Let $(0,0)$ correspond to this additional possibility.
%
%
It is hence optimal to invest $k_g^{(1)}$ (obtained using (\ref{eqn:optk1_new})) on node $\alpha$ in the first phase and $k_g^{(2)}$ (obtained using (\ref{eqn:optk2_new})) on node $\beta$ in the second phase, subject to it giving a  value greater than $\sum_i \sum_j c_i b_{ji}$ to Expression (\ref{eqn:twophase_onecamp_1}).
%


Since we iterate through $(n^2+1)$ possibilities (namely, $(\alpha,\beta) \in N \times N \cup \{(0,0)\}$), the above procedure gives a polynomial time algorithm for determining the optimal budget split and the optimal investments on nodes in two phases.

\begin{remark}
Expression (\ref{eqn:optk1_new}) indicates that for a given $(\alpha,\beta)$ pair, the good camp would want to invest more in the first phase for a higher $s_\alpha$. This is intuitive from our understanding of $s_\alpha$ being viewed as the Katz centrality of node $\alpha$ looking two phases ahead. Similarly, Expression (\ref{eqn:optk2_new}) indicates that it would want to invest more in the second phase for a higher $r_\beta$, since $r_\beta$ can be viewed as the Katz centrality of node $\beta$ looking just one phase ahead. 
Also as per (\ref{eqn:WonV}), $w_{ig}$ is an increasing function of $\theta_i$.
Now taking cue from Expression (\ref{eqn:2phase}), a node with a higher $w_{ig}$ is likely to be invested on more by the good camp. A similar insight is given by Expressions (\ref{eqn:optk1_new}) and (\ref{eqn:optk2_new}), where a higher $\theta_\alpha$ drives the camp to invest in the first phase and a higher $\theta_\beta$ drives it to invest in the second phase.
We further attempt to illustrate the role of $w_{ii}^0$ using simulations in Section \ref{sec:ODSNmultiphase_sim}.
\end{remark}

\vspace{-2mm}
\subsection{The Case of Competing Camps}
We now analyze the dependency setting with two camps.
%

Similar to the proof of Lemma \ref{lem:allornothing},
we can show that $\sum_i v_i^{(2)}$ is a multilinear function, since it can be written as a linear function in $\mathbf{x^{(1)}},\mathbf{y^{(1)}},\mathbf{x^{(2)}},\mathbf{y^{(2)}}$ separately.
The following lemma can be hence proved on similar lines.

\begin{lemma}
For either camp, it is optimal to exhaust the entire budget, and if not, it is optimal to not invest at all.
Furthermore, it is an optimal strategy to invest on at most one node in a given phase.
\label{lem:allornothing2camps}
\end{lemma}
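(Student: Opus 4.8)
The plan is to reduce the two-camp statement to the single-camp analysis already carried out for Lemma~\ref{lem:allornothing}. The key observation, which I would establish first by regrouping the objective~(\ref{eqn:orig_dep_2camps}), is that once one camp's strategy is held fixed, the objective is an \emph{affine} function of the remaining camp's variables within each phase, and jointly bilinear across the two phases --- precisely the structural form of Expressions~(\ref{eqn:raw2phasefix2}) and~(\ref{eqn:raw2phasefix1}). Concretely, to analyze the good camp's best response I would fix $(\mathbf{y^{(1)}},\mathbf{y^{(2)}})$ and collect the first factor of~(\ref{eqn:orig_dep_2camps}) as $\tilde c_i + \frac{\theta_i}{2}(c_i+1)x_i^{(1)}$ and the second factor as $\tilde b_{ji} + b_{ji}\frac{\theta_j}{2}x_j^{(2)}$, where the opponent-dependent constants $\tilde c_i = w_{ii}^0 v_i^0(1+\frac{\theta_i}{2}y_i^{(1)}) - \frac{\theta_i}{2}y_i^{(1)}$ and $\tilde b_{ji} = b_{ji}(1+\frac{\theta_j}{2}y_j^{(2)})$ simply absorb the fixed $\mathbf{y}$-values, while the coefficients multiplying $x_i^{(1)}$ and $x_j^{(2)}$ retain exactly their single-camp form. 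Thus the good camp faces the same bilinear optimization as in Lemma~\ref{lem:allornothing}, only with the intrinsic constants replaced by these opponent-shifted analogues.

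Given this, I would invoke the coordinate-wise argument verbatim. Fixing $\mathbf{x^{(2)}}$, the objective is linear in $\mathbf{x^{(1)}}$, so over the simplex $\{\mathbf{x^{(1)}}\ge\mathbf{0},\ \sum_i x_i^{(1)}\le k_g-\sum_j x_j^{(2)}\}$ the maximum is attained at a vertex: place the remaining budget on a single node with the largest coefficient if that coefficient is positive, and invest nothing otherwise; symmetrically for $\mathbf{x^{(2)}}$ given $\mathbf{x^{(1)}}$. Applying this to phase~1 and then to phase~2 transforms any feasible strategy into one that invests on at most one node per phase and is weakly better, which yields the single-node claim. For the all-or-nothing budget claim, I would note that with a single active node per phase the objective restricted to that phase's total investment is linear, so whenever the relevant coefficient is positive it is strictly beneficial to spend the entire remaining budget (forcing $k_g^{(1)}+k_g^{(2)}=k_g$), while if every per-phase coefficient is non-positive the optimum is $\mathbf{x^{(1)}}=\mathbf{x^{(2)}}=\mathbf{0}$ --- exactly as in the post-Lemma~\ref{lem:allornothing} discussion leading to~(\ref{eqn:twophase_onecamp_1}).

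The bad camp is handled identically after fixing $(\mathbf{x^{(1)}},\mathbf{x^{(2)}})$: the same regrouping shows $\sum_i v_i^{(2)}$ is bilinear in $(\mathbf{y^{(1)}},\mathbf{y^{(2)}})$, and since the bad camp minimizes, the vertex selection flips --- it places the remaining budget on the single node with the most negative coefficient in each phase, and invests nothing if all coefficients in both phases are non-negative. I expect the only real subtlety, and hence the main thing to argue carefully, to be the all-or-nothing budget claim rather than the single-node claim: because the objective is bilinear and not jointly concave or convex, I cannot simply appeal to vertices of the full two-phase budget polytope, since an interior split between the two phases can be optimal, as the explicit $k_g^{(1)}$ in~(\ref{eqn:optk1_new}) shows. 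The correct argument is the one above --- linearity within each single active phase rules out leaving budget unspent, while the budget-exhausting face is where any genuine two-phase split occurs. Since holding the opponent fixed only shifts the additive constants and never destroys this structure, the conclusion transfers to each camp's best response for arbitrary play of the other, completing the proof.
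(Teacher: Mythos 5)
Your proposal is correct and follows essentially the same route as the paper: the paper's proof simply observes that $\sum_i v_i^{(2)}$ is multilinear (linear in each of $\mathbf{x^{(1)}},\mathbf{y^{(1)}},\mathbf{x^{(2)}},\mathbf{y^{(2)}}$ separately) and then invokes the coordinate-wise vertex/iterative-improvement argument of Lemma~\ref{lem:allornothing}, which is exactly what you do by absorbing the fixed opponent's variables into shifted constants $\tilde c_i$ and $\tilde b_{ji}$. Your explicit regrouping and the remark on why one cannot appeal to joint concavity over the full two-phase polytope are just a more detailed rendering of the same argument.
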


Denoting $r_j w_{jj}^0 \Delta_{ji}= b_{ji}$ and $w_{ii}^0 v_i^{0} = c_i$ as before,
from Expression (\ref{eqn:orig_dep_2camps}),
$\sum_i v_i^{(2)}$ can be expanded as

\begin{small}
\begin{align}
\hspace{-5mm}
\nonumber 
& \sum_i \sum_j c_i b_{ji} 
\! + \! \sum_j x_j^{(2)} \frac{\theta_j}{2} \Big( \sum_i  c_i b_{j i} \! + \! r_j \Big)
\! + \! \sum_j y_j^{(2)} \frac{\theta_j}{2} \Big( \sum_i c_i b_{j i} \! - \! r_j \Big)
\\ \hspace{-5mm}&\;\;\;\;
\nonumber
+ \sum_i x_i^{(1)}  \frac{\theta_i}{2} (1 \! + \! c_i) \Big( s_{i} 
\! + \! \sum_j x_j^{(2)} \frac{\theta_j}{2} b_{j i}
\! + \! \sum_j y_j^{(2)} \frac{\theta_j}{2} b_{j i} \Big)
\\ \hspace{-5mm}&\;\;\;\;
- \sum_i y_i^{(1)} \frac{\theta_i}{2} (1 \! - \! c_i) \Big( s_i 
\! + \! \sum_j x_j^{(2)} \frac{\theta_j}{2} b_{j i}
\! + \! \sum_j y_j^{(2)} \frac{\theta_j}{2} b_{j i} \Big)
\label{eqn:raw2phase2camp}
\end{align}
\end{small}
From Lemma \ref{lem:allornothing2camps}, there exist optimal vectors $\mathbf{x^{(1)}},\mathbf{x^{(2)}}$ for  good camp and optimal vectors $\mathbf{y^{(1)}},\mathbf{y^{(2)}}$ for  bad camp, such that $x_{\alpha}^{(1)} = k_g^{(1)}, x_{\beta}^{(2)} = k_g^{(2)} , y_{\gamma}^{(1)} = k_b^{(1)} , y_{\delta}^{(2)} = k_b^{(2)}$, and $x_{i \neq \alpha}^{(1)} = x_{j \neq \beta}^{(2)} = y_{i \neq \gamma}^{(1)} = y_{j \neq \delta}^{(2)} = 0$.
Assuming such profile of nodes $((\alpha,\beta),(\gamma,\delta))$, we first aim to find 
$((x_{\alpha}^{(1)}, x_{\beta}^{(2)}),( y_{\gamma}^{(1)}, y_{\delta}^{(2)}))$, or equivalently, 
the optimal $((k_g^{(1)}, k_g^{(2)}) , (k_b^{(1)} , k_b^{(2)}))$ corresponding to such a profile.
%
%
So by incorporating $((\alpha,\beta),(\gamma,\delta))$, Expression (\ref{eqn:raw2phase2camp}) for $\sum_i v_i^{(2)}$ can be simplified as

\begin{small}
\begin{align}
\nonumber
\sum_i \sum_j & c_i b_{ji} 
+ k_g^{(2)} \frac{\theta_\beta}{2} \Big( \sum_i  c_i b_{\beta i} + r_\beta\Big)
+ k_b^{(2)} \frac{\theta_\delta}{2} \Big( \sum_i c_i b_{\delta i} -r_\delta\Big)
\\&
\nonumber
+ k_g^{(1)}  \frac{\theta_\alpha}{2} (1+c_\alpha) \Big( s_{\alpha} 
+ k_g^{(2)} \frac{\theta_\beta}{2} b_{\beta \alpha}
+ k_b^{(2)} \frac{\theta_\delta}{2} b_{\delta \alpha} \Big)
\\&
- k_b^{(1)} \frac{\theta_\gamma}{2} (1-c_\gamma) \Big( s_\gamma 
+ k_g^{(2)} \frac{\theta_\beta}{2} b_{\beta \gamma}
+ k_b^{(2)} \frac{\theta_\delta}{2} b_{\delta \gamma} \Big)
\label{eqn:twophase_twocamp_new}
\end{align}
\end{small}

First, we consider the case when $k_g^{(1)} + k_g^{(2)} = k_g$ and $k_b^{(1)} + k_b^{(2)} = k_b$.
Now, for a given profile of nodes $((\alpha,\beta),(\gamma,\delta))$, we will find the optimal values of $k_g^{(1)},k_g^{(2)},k_b^{(1)},k_b^{(2)}$. In this case, we have $k_g^{(2)} = k_g - k_g^{(1)}$ and $k_b^{(2)} = k_b - k_b^{(1)}$. 
Substituting this in  (\ref{eqn:twophase_twocamp_new}) and simultaneously solving 
$\frac{\partial \sum_i v_i^{(2)} }{\partial k_g^{(1)}}=0$ and
$\frac{\partial \sum_i v_i^{(2)} }{\partial k_b^{(1)}}=0$, we get
$k_g^{(1)}$ equal to
(letting $A = \theta_\gamma \theta_\delta (1-c_\gamma) b_{\delta \gamma}$ and $B = \frac{1}{2} \left( \theta_\alpha \theta_\delta (1+c_\alpha)b_{\delta \alpha} - \theta_\gamma \theta_\beta (1-c_\gamma)b_{\beta \gamma} \right)$ ):

\begin{footnotesize}
\begin{align*}
\hspace{-3mm}
\nonumber
\frac{1}{ B^2 \! + \! \theta_\alpha \theta_\beta  (1 \! + \! c_\alpha)b_{\beta \alpha} A}
\Big[
s_\alpha \theta_\alpha (1 \! + \! c_\alpha)A 
\! - \! r_\beta \theta_\beta A
\! - \! s_\gamma \theta_\gamma (1 \! - \! c_\gamma)B
 \! + \! r_\delta \theta_\delta B
\\
\nonumber
+ k_g\Big(\frac{\theta_\alpha\theta_\beta}{2} (1+c_\alpha)b_{\beta\alpha}A - \frac{\theta_\gamma\theta_\beta}{2} (1-c_\gamma)b_{\beta\gamma}B\Big)
\\
\nonumber
+ k_b \Big(\frac{\theta_\alpha\theta_\delta}{2} (1+c_\alpha)b_{\delta\alpha}A - \frac{\theta_\gamma\theta_\delta}{2} (1-c_\gamma)b_{\delta\gamma}B\Big)
\\
-\theta_\beta A \sum_i c_i b_{\beta i} - \theta_\delta B \sum_i c_i b_{\delta i}
\Big]
\end{align*}
\end{footnotesize}

\begin{figure*}
\centering
\begin{tabular}{ccc}
\hspace{-3mm}
\includegraphics[width=0.33\textwidth]{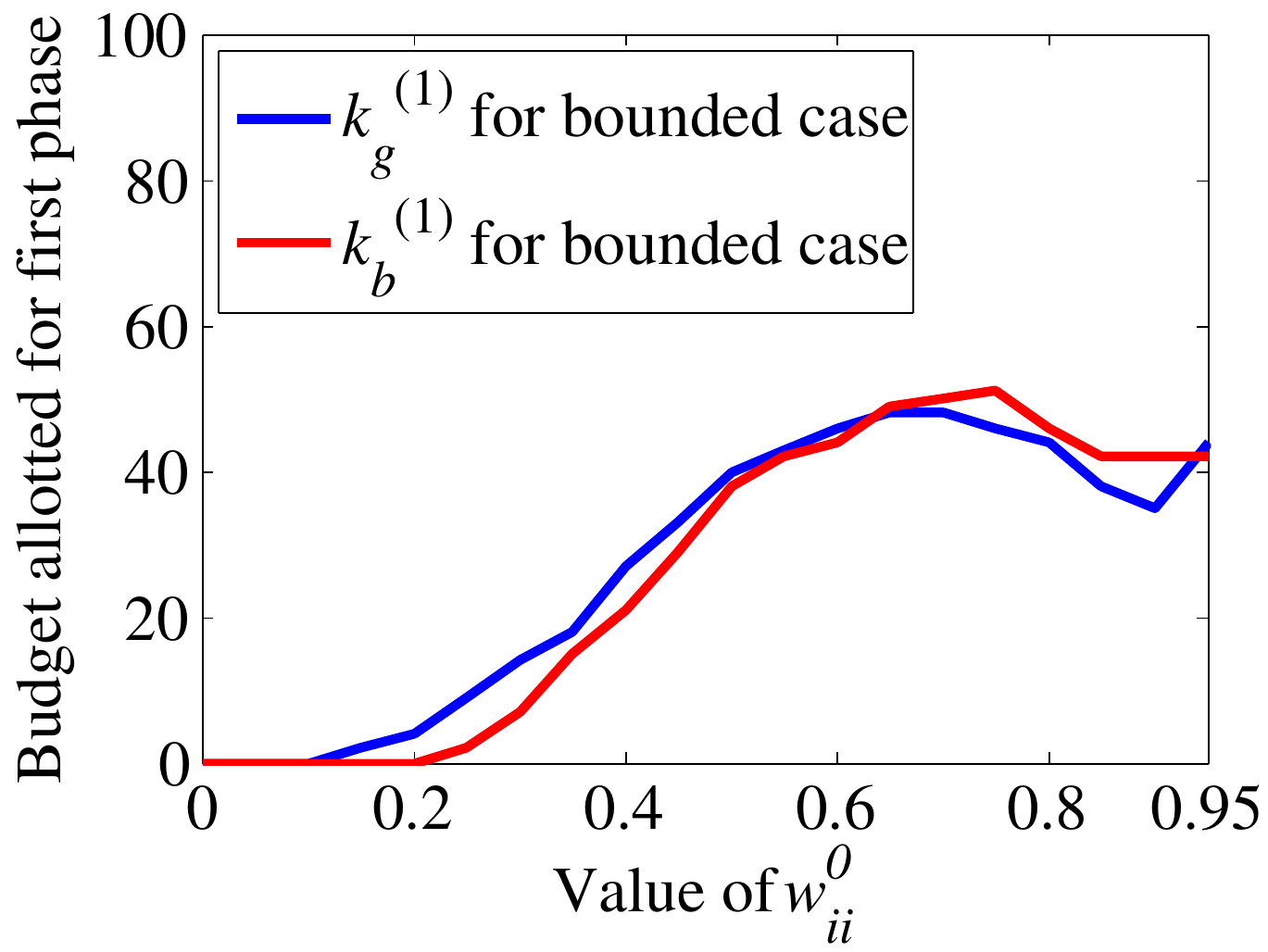}
&
\hspace{-3mm}
\includegraphics[width=0.33\textwidth]{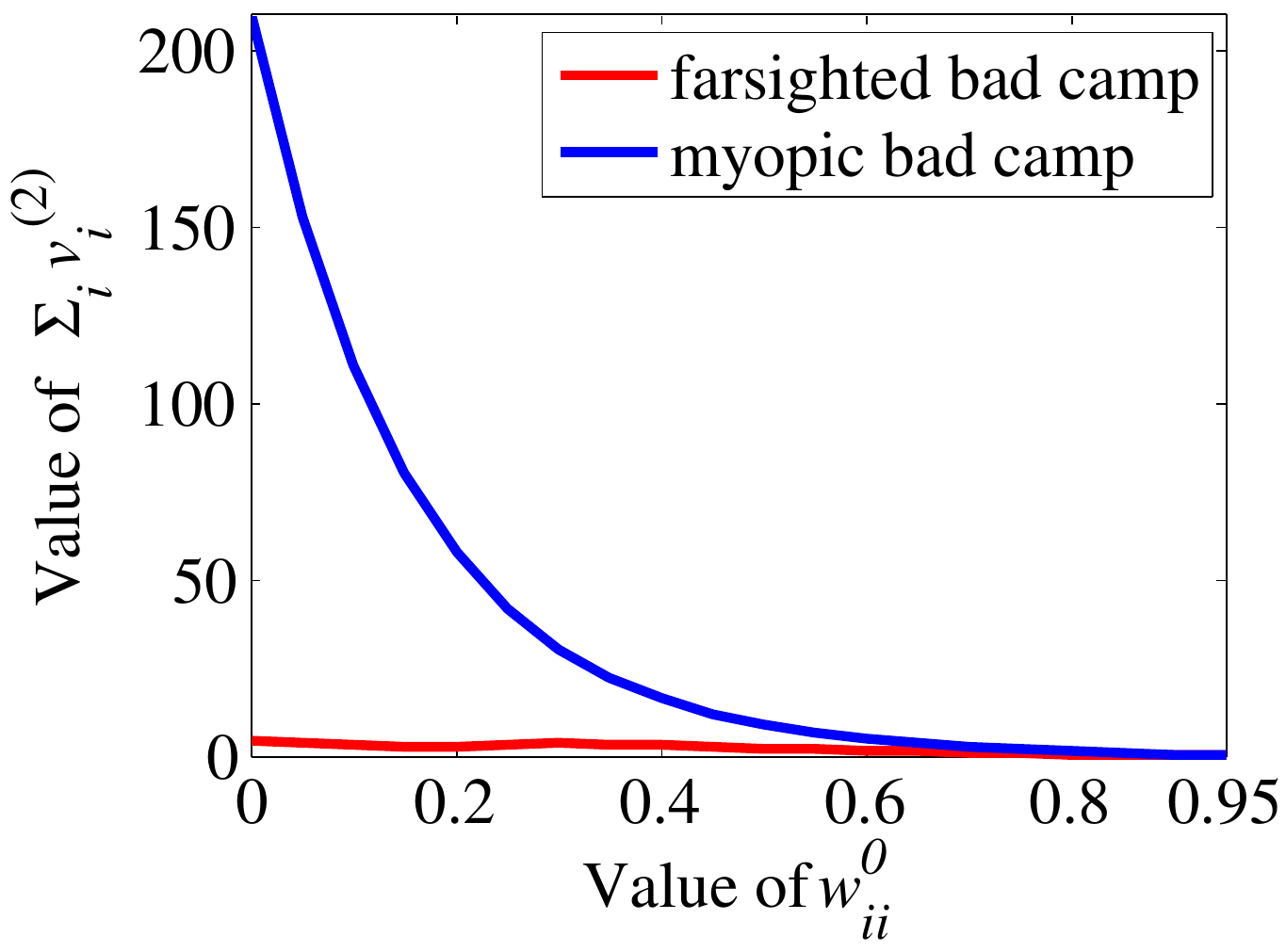}
&
\hspace{-3mm}
\includegraphics[width=0.33\textwidth]{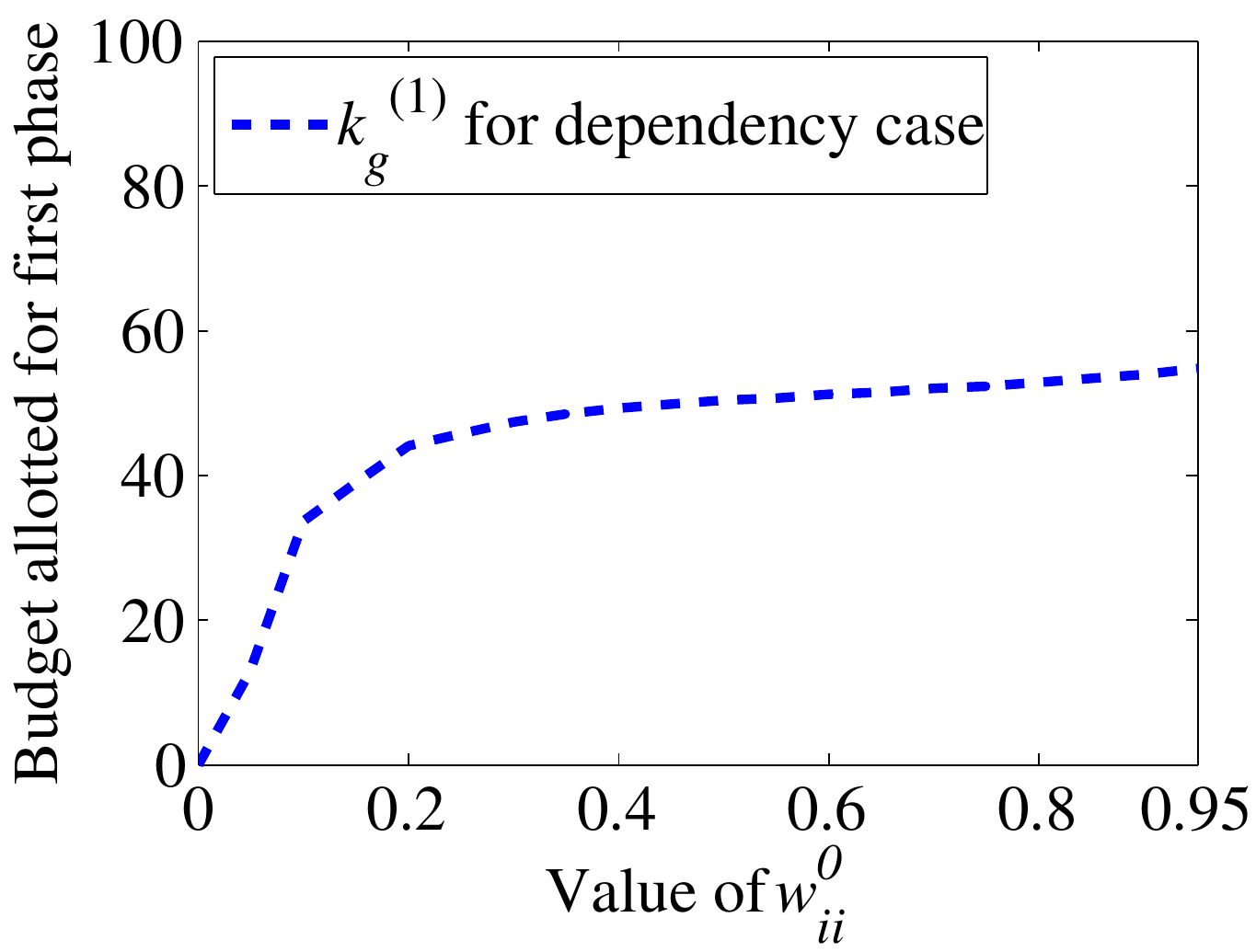}
\\
\hspace{-3mm}
(a) Optimal budget split in 
&
\hspace{-3mm}
(b) Myopic vs farsighted bad camp in
&
\hspace{-3mm}
(c) Optimal budget split in
\\
bounded setting ($k_g=k_b=100$)
&
 bounded setting ($k_g=k_b=100$)
&
 dependency setting ($k_g = 100$)
\end{tabular}
\caption{
Results illustrating the effects of
 $w_{ii}^0$ (NetHEPT)}
\label{fig:budgetsplit_hep}
\vspace{-2mm}
\end{figure*}

We can similarly obtain $k_b^{(1)}$.
%
%
%
%
%
If the second derivative with respect to $k_g^{(1)}$, that is,
$- {\theta_\alpha} {\theta_\beta} r_\beta w_{\beta\beta}^0 \Delta_{\beta\alpha}  (1+w_{\alpha \alpha}^0 v_\alpha^0)  < 0$
and that with respect to $k_b^{(1)}$, that is,
${\theta_\gamma} {\theta_\delta} r_\delta w_{\delta\delta}^0 \Delta_{\delta\gamma}  (1-w_{\gamma \gamma}^0 v_\gamma^0) > 0$, 
and the obtained solution is such that $k_g^{(1)} \in [0,k_g]$ and $k_b^{(1)} \in [0,k_b]$, then we have 
that neither the good camp can change $k_g^{(1)}$ to increase $\sum_i v_i^{(2)}$, nor the bad camp can change $k_b^{(1)}$ to decrease $\sum_i v_i^{(2)}$.
%
%
%
So we can effectively write $u_g((\mathbf{x^{(1)}},\mathbf{x^{(2)}}),(\mathbf{y^{(1)}},\mathbf{y^{(2)}}))$
as $u_g((\alpha,\beta),(\gamma,\delta))$, where $u_g((\alpha,\beta),(\gamma,\delta))$ is the value of $\sum_i v_i^{(2)}$,
which corresponds to the strategy profile where the good camp invests on nodes $(\alpha,\beta)$ with  optimal budget split $(k_g^{(1)},k_g^{(2)})$, and the bad camp invests on nodes $(\gamma,\delta)$ with  optimal budget split $(k_b^{(1)},k_b^{(2)})$.


For a general case where the solution $k_g^{(1)},k_b^{(1)}$ obtained above may not satisfy $k_g^{(1)} \in [0,k_g]$ and $k_b^{(1)} \in [0,k_b]$, we make a practically reasonable assumption so as to determine $u_g((\alpha,\beta),(\gamma,\delta))$.
It is easy to show that if $w_{ij} \geq 0, \forall (i,j)$, we would have $\Delta_{ij} \geq 0, \forall (i,j)$ and $r_i \geq 1, \forall i$.
So if we assume that $w_{ij} \geq 0, \forall (i,j)$ and $w_{ii}^0 \geq 0,\theta_i \geq 0, v_i^0 \in [-1,1], \forall i$,
we would have that 
$- {\theta_\alpha} {\theta_\beta} r_\beta w_{\beta\beta}^0 \Delta_{\beta\alpha}  (1+w_{\alpha \alpha}^0 v_\alpha^0)  \leq 0$
and ${\theta_\gamma} {\theta_\delta} r_\delta w_{\delta\delta}^0 \Delta_{\delta\gamma}  (1-w_{\gamma \gamma}^0 v_\gamma^0) \geq 0$.
That is, we would have $\sum_i v_i^{(2)}$ to be a convex-concave function, which is concave with respect to $k_g^{(1)}$ and convex with respect to $k_b^{(1)}$.
So in the domain $([0,k_g],[0,k_b])$, we can find a $(k_g^{(1)},k_b^{(1)})$ such that,
neither the good camp can change $k_g^{(1)}$ to increase $\sum_i v_i^{(2)}$, nor the bad camp can change $k_b^{(1)}$ to decrease $\sum_i v_i^{(2)}$ \cite{boyd2004convex,arrow1958studies}.
So we can assign this value $\sum_i v_i^{(2)}$ to $u_g((\alpha,\beta),(\gamma,\delta))$.

Thus using the above technique, we obtain  $u_g((\alpha,\beta),(\gamma,\delta))$ for all profiles of nodes $((\alpha,\beta),(\gamma,\delta))$ in
the case when $k_g^{(1)} + k_g^{(2)} = k_g$ and $k_b^{(1)} + k_b^{(2)} = k_b$.
From Lemma \ref{lem:allornothing2camps}, the only other cases to be considered are $k_b^{(1)} = k_b^{(2)} = 0$ and $k_g^{(1)} = k_g^{(2)} = 0$.
Let the profile of nodes $((\alpha,\beta),(0,0))$ correspond to $k_b^{(1)} = k_b^{(2)} = 0$.
Note that when  $k_b^{(1)} = k_b^{(2)} = 0$, it reduces to the single camp case with only the good camp (Section \ref{sec:dep_onecamp}). The value of $\sum_i v_i^{(2)}$ for an $(\alpha,\beta)$ pair can hence be assigned to $u_g((\alpha,\beta),(0,0))$.
Thus we can obtain $u_g((\alpha,\beta),(\gamma,\delta))$ for all profiles of nodes $((\alpha,\beta),(0,0))$.
Similarly, we can obtain $u_g((\alpha,\beta),(\gamma,\delta))$ for all profiles of nodes $((0,0),(\gamma,\delta))$.
And from Equation (\ref{eqn:twophase_twocamp_new}), we know that $u_g((0,0),(0,0)) = \sum_i \sum_j c_i b_{ji} $.

So we have that the good camp has $(n^2+1)$ possible pure strategies to choose from, namely, $(\alpha,\beta) \in N \times N \cup \{(0,0)\}$.
Similarly, the bad camp has $(n^2+1)$ possible pure strategies to choose from, namely, $(\gamma,\delta) \in N \times N \cup \{(0,0)\}$.
We thus have a two-player zero-sum game, for which the utilities of the players can be computed for each strategy profile $((\alpha,\beta),(\gamma,\delta))$ as explained above.
Though we cannot ensure the existence of a pure strategy Nash equilibrium, the finiteness of the number of strategies ensures the existence of a mixed strategy Nash equilibrium. Furthermore, owing to it being a two-player zero-sum game, the Nash equilibria can be found efficiently using linear programming \cite{osborne2004introduction}.

Summarizing, 
under practically reasonable assumptions
($w_{ij} \geq 0, \forall (i,j)$ and $w_{ii}^0 \geq 0,\theta_i \geq 0, v_i^0 \in [-1,1], \forall i$),
we transformed the problem into a two-player zero-sum game with each player having $(n^2+1)$ pure strategies,
and showed how the players' utilities can be computed for each strategy profile.
We
thus deduced the existence of Nash equilibria and that they can be found efficiently
using linear programming.

\vspace{-2mm}
\section{Simulations and Results}
\label{sec:ODSNmultiphase_sim}

%
For determining implications of our analytically derived results on real-world networks,
we conducted a simulation study on 
a popular network dataset:
NetHEPT (a co-authorship network in the ``High Energy Physics - Theory'' papers on the e-print arXiv from 1991 to 2003),
consisting of 15,233 nodes and 31,376 edges. This network exhibits many structural features of large-scale social networks  and is widely used for experimental justifications, for example, in 
\cite{kempe2003maximizing,chen2009efficient,chen2010scalable}.
In our simulations,
we consider $v_i^0=0, \forall i$.
We transform the original undirected network into a directed one by making all edges bidirectional.
Let $N(i)$ be the set of neighbors of node $i$ in the network.
For a fixed $w_{ii}^0$, we consider $w_{ig},w_{ib}>0$ and $w_{ij} \geq 0, \forall j \in N(i)$ such that
the weights for any node $i$ sum to 1, that is, $w_{ig}+w_{ib}+\sum_{j\in N(i)} w_{ij}=1-w_{ii}^0$.
Furthermore, for different values of $w_{ii}^0$, the values of $w_{ig},w_{ib},w_{ij}$ are scaled proportional to $1-w_{ii}^0$.
That is, the values $\hat{w}_{ig},\hat{w}_{ib},\hat{w}_{ij}$ corresponding to $\hat{w}_{ii}^0$ and the values $\tilde{w}_{ig},\tilde{w}_{ib},\tilde{w}_{ij}$ corresponding to $\tilde{w}_{ii}^0$, are such that

\begin{small}
\vspace{-3mm}
\begin{align*}
\forall i:\;
\frac{1-\hat{w}_{ii}^0}{1-\tilde w_{ii}^0}  = \frac{\hat{w}_{ig}}{\tilde w_{ig}} = \frac{\hat{w}_{ib}}{\tilde w_{ib}} = \frac{\hat{w}_{ij}}{\tilde w_{ij}}, \; \forall j \in N(i)
\end{align*}
\end{small}
%
%
%
%
%
%
%
%
For the case of dependency among parameters, where $w_{ig}^{(q)} = \frac{\theta_i}{2} ( {1+w_{ii}^0 v_i^{(q-1)}} )$ and 
$w_{ib}^{(q)} = \frac{\theta_i}{2} ( {1-w_{ii}^0 v_i^{(q-1)}} )$,
we let 
$\theta_i = \tilde{w}_{ig}+\tilde{w}_{ib}$,
where $\tilde{w}_{ig}$ and $\tilde{w}_{ib}$ are as above.
Furthermore, since we assume $v_i^0=0$, we would have $w_{ig}^{(1)} = w_{ib}^{(1)} = \frac{\theta_i}{2}$.
We consider different values of $w_{ii}^0$, namely, $\{0,0.05,\ldots,0.95\}$.
In our simulation study, we assume that this value is same for all nodes.
This allows us to study the effect of the values of $w_{ii}^0$ on the two phase strategy of the camps.





\textbf{Simulation Results.}
Figures \ref{fig:budgetsplit_hep} (a) and \ref{fig:budgetsplit_hep}(c)
present the optimal amount of budget that should be invested in the first phase as a function of $w_{ii}^0$ for the bounded setting (where investment on each node is bounded in either phase) and the dependency setting (where the influence weight of the camp on a node depends on its initial bias), respectively. In our simulations, the optimal values obtained are such that $k_g^{(2)}=k_g-k_g^{(1)}$ and $k_b^{(2)}=k_b-k_b^{(1)}$.
For low values of $w_{ii}^0$,
the optimal strategy of the camps is to invest almost entirely in the second phase. This is because the effect of first phase diminishes in second phase. 
The value $s_j = \sum_i r_i w_{ii}^0 \Delta_{ij}$ of a node $j$ would be significant only if it influences nodes $j$ with significant values of $w_{ii}^0$. And hence, investing in first phase would be advantageous only if we have nodes which have significant value of $w_{ii}^0$.
With low values to $w_{ii}^0$; we are less likely to have nodes $j$ with high values of $s_j$ since it requires them to not only be influential but also influential towards significant number of nodes $i$ with significant values of $w_{ii}^0$.
%
This also justifies Figure \ref{fig:budgetsplit_hep}(b) where, if the bad camp acts myopically (spending its entire budget in the first phase), it incurs significantly more loss for lower values of $w_{ii}^0$ (higher value of $\sum_i v_i^{(2)}$ is worse for the bad camp).

Though we have observed that low values of $w_{ii}^0$ are detrimental for the values of $s_j$ and hence for investment in the first phase, the budget allotted to the first phase need not be a monotone function of $w_{ii}^0$.
It is possible that  for a value of $w_{ii}^0$, a set of nodes have good values of $s_j=\sum_i r_i w_{ii}^0 \Delta_{ij}$ and hence are worth investing on in the first phase. However for a higher range of $w_{ii}^0$, the values of $w_{ig},w_{ib},w_{ij}$ scale down, and hence the first phase worth of a node $j$ ($s_j w_{jg}$ or $s_j w_{jb}$) may be superseded by the second phase worth of some other node $j'$ ($r_{j'}w_{j'g}$ or $r_{j'}w_{j'b}$), thus possibly making it advantageous to invest in the second phase for a higher $w_{ii}^0$.
%
%
In Figure \ref{fig:budgetsplit_hep}(a), for the setting where the values of $w_{ig}$ and $w_{ib}$ do not depend on 
the initial opinion  of node $i$ (when we consider bounded investment per node),
the general observations suggest that 
a high range of $w_{ii}^0$ makes it advantageous for the camps to invest in the first phase, albeit in a non-monotone way owing to the aforementioned reason.
In Figure \ref{fig:budgetsplit_hep}(c), for the dependency setting  with single camp (where we consider unbounded investment per node),
a high range of $w_{ii}^0$ makes it advantageous for the camps to invest in the first phase.
Here we point  that though we presented a polynomial time algorithm for the dependency setting with two camps, which is of theoretical interest, it is computationally expensive to run on networks larger than a few hundred nodes.

%


%

\vspace{-2mm}
\section{Conclusion
}

Using DeGroot-Friedkin model of opinion dynamics, we proposed a framework for optimal multiphase investment strategies for two competing camps in a social network. 
%
We focused on two phases and derived closed-form expressions for optimal strategies, and 
 the extent of loss that a camp would incur if it acted myopically 
against a farsighted competitor.
We then studied the setting where the influence of the camps on a node depended on the node's initial bias. For the case of single camp, we derived polynomial time algorithm for determining an optimal way of splitting the  budget between the two phases. 
%
%
For the case of two camps, we showed the existence of Nash equilibria under reasonable assumptions, and that they can be computed in polynomial time.
Our simulations quantified the impact of the weightage that nodes attribute to their initial biases. Notably, higher weightage would necessitate more investment in the first phase, so as to influence these biases for the second (terminal) phase.

As future direction,
 it would be interesting to study optimal multiphase investment strategies under other models of opinion dynamics in the literature.
%
The two camps setting can be extended to multiple camps.
%
Also, the two phase study can be generalized to multiple phases to see if any important insights or benefits can be obtained.
%
%
It would be interesting to study the problem with bounds on total investment on a node by the two camps together (such as $\forall i,\, x_i+y_i \leq 1$).
It is also worth exploring the possibility of more efficient algorithms for the dependency setting with two camps.

\vspace{-2mm}
\bibliographystyle{IEEEtran}
\bibliography{ODSN_Multiphase_references} 



\end{document}